\theoremstyle{plain}
\newtheorem*{theorem}{Theorem}
\newtheorem*{lemma}{Lemma}
\numberwithin{equation}{section}
\author{Nikolay K. Vitanov$^{1,2,}$\footnote{corresponding author, e-mail: vitanov{@}imbm.bas.bg}, Zlatinka I. Dimitrova$^3$, Tsvetelina I. Ivanova$^4$}
\date{$^1$Institute of Mechanics, Bulgarian Academy of Sciences, Acad. G. Bonchev Str.,
Bl. 4, 1113 Sofia, Bulgaria\\ 
$^2$Max-Planck Institute for the Physics of Complex Systems, N{\"o}thnitzer Str. 38,
01187 Dresden, Germany\\ 
$^3$ "G. Nadjakov" Institute for Solid State Physics, Bulgarian Academy of Sciences,
Blvd. Tzarigradsko Chaussee 72, 1784, Sofia, Bulgaria\\ 
$^4$ Faculty of Physics, "St. Kliment Ohridski" University of Sofia, Blvd. J. Bourchier 5,
1164 Sofia, Bulgaria}
\title{Solitary wave solutions of nonlinear partial differential equations based on the simplest equation for the function $1/\cosh^n$}
\begin{document}
\maketitle
\begin{abstract} 
The method of simplest equation is applied for obtaining 
exact solitary traveling-wave solutions of nonlinear partial 
differential equations that contain monomials of odd and even 
grade with respect to participating derivatives. The used simplest equation is
$f_\xi^2 = n^2(f^2 -f^{(2n+2)/n})$. The developed methodology is illustrated
on two examples of classes of nonlinear partial differential equations that
contain: (i) only monomials of odd grade with respect to participating
derivatives; (ii) only monomials of even grade with respect to participating
derivatives. The obtained solitary wave solution for the case (i) contains as
particular cases the solitary wave solutions of Korteweg-deVries equation and
of a version of the modified Korteweg-deVries equation.
\end{abstract}
%
%
\section{Introduction}

The methods of nonlinear dynamics, nonlinear time series analysis and 
theory of differential equations are highly actual because of their numerous applications in data analysis,  theory of chaos, social dynamics, etc. \cite{p1} - \cite{p23}.
Traveling wave solutions of the nonlinear partial differential equations are
studied much in the last decades \cite{scott}-\cite{kudr90} as such waves exist 
in many natural systems \cite{infeld}-\cite{ablowitz1}. In addition
effective methods exist for obtaining exact traveling-wave solutions, e.g., 
the method of inverse scattering transform or the method of Hirota \cite{gardner} - \cite{hirota}. Many other approaches for obtaining exact special
solutions of nonlinear PDEs have been developed in the recent years. Let us note here
only the method of simplest equation and its version called modified method of simplest
equation \cite{kudr05x} - \cite{v15} as these methods are closely connected to 
the discussion below. 
\par
The method of simplest equation is based on  a procedure analogous to the first 
step of the test for the Painleve property \cite{kudr05x}, \cite{kudr08}, 
\cite{kudr05}-\cite{k10}. In the version of the method called  modified method of 
the simplest equation \cite{vdk10} - \cite{v15} this procedure is substituted by  the concept for the balance equation. This version of the method of simplest equation has
been successfully applied for obtaining exact traveling wave solutions of  numerous
nonlinear PDEs such as versions of generalized Kuramoto - Sivashinsky equation, reaction - diffusion  equation, reaction - telegraph equation\cite{vdk10}, \cite{vd10}
generalized Swift - Hohenberg equation and generalized Rayleigh equation \cite{vit11}, 
generalized Fisher equation, generalized Huxley equation \cite{vit09x}, generalized Degasperis - Procesi equation and b-equation \cite{vit11a}, extended Korteweg-de Vries equation \cite{v13}, etc.  \cite{v13a}, \cite{v14}.
\par
A short summary of the method of simplest equation we shall use below is as follows. First of all by means of an appropriate ansatz (for an example the traveling-wave ansatz) the solved  nonlinear partial differential equation is reduced to a nonlinear
ordinary differential equation
\begin{equation}\label{i1}
P \left( u, u_{\xi},u_{\xi \xi},\dots \right) = 0
\end{equation}
Then the solution $u(\xi)$ is searched as some function of another function $f(\xi)$.
Often this function is a finite-series solution 
\begin{equation}\label{i2}
u(\xi) = \sum_{\mu=-\nu}^{\nu_1} p_{\mu} [f (\xi)]^{\mu},
\end{equation}
where $p_\mu$ are coefficients and $f(\xi)$ is solution of
simpler ordinary differential equation called simplest equation. 
Eq.(\ref{i2}) is substituted in Eq.(\ref{i1}) and let the result
of this substitution be a polynomial of $f(\xi)$. Then Eq. (\ref{i2}) is a candidate
for solution of Eq.(\ref{i1}) if all coefficients of the obtained polynomial of $f(\xi)$ 
are equal to $0$. This condition leads to a system of nonlinear algebraic 
equations for the coefficients of the solved nonlinear PDE and for the
coefficients of the solution. Any nontrivial solution of this algebraic system
leads to a solution the studied  nonlinear partial differential equation.
\par
Below we shall consider traveling-wave solutions 
\begin{math}
u(x,t) = u( \xi ) = u(\alpha x + \beta t), 
\end{math} constructed on the basis of the simplest equation 
\begin{equation}\label{se}
f_{\xi}^2 = n^2 \left( f^2 - f^{\frac{2n+2}{n}} \right),
\end{equation}
where $n$ is \emph{arbitrary positive real number}.
The solution of this equation is $f( \xi ) = \frac{1}{\cosh^{n}(x)}$.
The text is organized as follows. In Sect. 2 we  describe the 
main result which is formulated as a theorem. In section 3 we discuss several
examples for solitary wave solutions obtained by the developed methodology. 
Several concluding remarks  are summarized in Sect.4.
\section{Main result}
We shall use the concept of grade of a monomial with respect to participating derivatives.
Let us consider polynomials that are linear combination of monomials and the 
monomials contain product of terms  consisting of product of powers of derivatives of
some function $u(\xi)$ of different orders. This product of terms can be 
multiplied by a polynomial of $u$. Let a term from a monomial contains $k$th 
power of a derivative of
$u$ of $l$th order. We shall call the product $kl$ grade of the term with respect to 
participating derivatives. The sum of these grades of all terms of the monomial will be called grade of the monomial with respect to participating derivatives. In the general case the 
polynomial will contain monomials of odd and even grades with respect to participating derivatives. There are two particular cases: (i) the polynomial contains monomials 
that are only of odd grades with respect to participating derivatives; (ii) the polynomial contains monomials that are only of even grades with respect to participating derivatives.
Below we formulate a theorem about solitary wave solutions for the class of nonlinear PDEs 
that contain monomials of derivatives which order with respect to 
participating derivatives is even and monomials of derivatives which 
order with respect to participating derivatives is odd.
\par
First of all we shall use the method of induction to prove an useful lemma and
then we shall prove a theorem that will be our main result.
\begin{lemma}
Let $f(\xi)$ be a function that is solution of Eq.(\ref{se}). Let us consider
the function $F(\xi)=f^{(in+2j)/n}$ where $n$ is a positive real number and 
$i$ and $j$ are non-negative integer numbers. Then the odd derivatives of $F$ 
contain $f_\xi$ multiplied by expression(s) of the kind $f^{(kn+2j)/n}$ and 
the even derivatives of $F$ contain expressions of the kind $f^{(kn+2j)/n}$,
where $k$ is some integer number.
\end{lemma}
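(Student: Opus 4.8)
The plan is to induct on the order of the derivative, treating odd-order and even-order derivatives as the two interleaving cases of a single induction and using Eq.~(\ref{se}) repeatedly to remove every occurrence of $f_\xi^2$. Writing the exponent of $F=f^{(in+2j)/n}$ as $\alpha = i + 2j/n$, the base case is immediate: $F_\xi = \alpha f^{\alpha-1} f_\xi$, and since $\alpha - 1 = ((i-1)n+2j)/n$ this is $f_\xi$ times a power $f^{(kn+2j)/n}$ with $k=i-1$, which is of the claimed form (while $F$ itself, the zeroth derivative, is already a pure power of the same type).

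For the inductive step I would distinguish the parity of the derivative just computed. Suppose the even derivative $F^{(m)}$ is a finite sum $\sum_r c_r f^{(k_r n + 2 j_r)/n}$ of pure powers of the allowed type. Differentiating once simply lowers each exponent by one and brings out a single factor $f_\xi$, giving $F^{(m+1)} = f_\xi \sum_r c_r \frac{k_r n + 2 j_r}{n} f^{((k_r-1)n+2j_r)/n}$; every term is again $f_\xi$ times a power of the required form, so the odd case is settled without any use of the simplest equation.

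The genuinely substantive step is the passage from an odd derivative back to an even one, and this is where the simplest equation does the work. Assuming $F^{(m)} = f_\xi \sum_r c_r f^{\beta_r}$ with $\beta_r = k_r + 2 j_r / n$, I would write
\[
F^{(m+1)} = f_{\xi\xi}\sum_r c_r f^{\beta_r} + f_\xi \sum_r c_r \beta_r f^{\beta_r-1} f_\xi .
\]
The second sum contains $f_\xi^2$, which I replace using Eq.~(\ref{se}) by $n^2(f^2 - f^{(2n+2)/n})$; for $f_{\xi\xi}$ I first differentiate Eq.~(\ref{se}) and cancel the common factor $2f_\xi$, obtaining the purely algebraic expression $f_{\xi\xi} = n^2 f - n(n+1) f^{(n+2)/n}$. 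Substituting both removes $f_\xi$ entirely and leaves $F^{(m+1)}$ as a finite sum of pure powers of $f$, as required for an even derivative.

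What then remains is the bookkeeping I expect to be the main (though routine) obstacle: checking that each resulting exponent is still of the form $(kn + 2j)/n$ with $k$ an integer and $2j$ a non-negative even integer. The only nontrivial increments are $\beta_r \mapsto \beta_r + 1 + 2/n$, arising from the $f^{(2n+2)/n}$ and $f^{(n+2)/n}$ pieces; since $(2n+2)/n - 1 = (n+2)/n$ and $1 + 2/n$ add exactly $n$ to the integer part and $2$ to the even part of the numerator, the integer $k$ changes by an integer and the even offset merely increases by $2$, so the prescribed form is preserved. This closes the induction, and I would stress explicitly that the qualitative conclusion — a single surviving factor $f_\xi$ in every odd derivative and none in the even ones — is exactly what makes the parity of the grade of a monomial transparent in the theorem that follows.
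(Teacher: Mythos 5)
Your proposal is correct and follows essentially the same route as the paper: an induction on the order of the derivative in which the even-to-odd step is a plain differentiation and the odd-to-even step uses Eq.~(\ref{se}) together with the derived identity $f_{\xi\xi}=n^2f-n(n+1)f^{(n+2)/n}$ to eliminate $f_\xi^2$ and $f_{\xi\xi}$. Your exponent bookkeeping (the shift by $1+2/n$ adding $n$ to the integer part and $2$ to the even offset) matches the paper's observation that the new terms are of the kind $f^{(in+2j)/n}$ and $f^{[in+2(j+1)]/n}$.
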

\begin{proof}
Let us consider first the case $i=0$. Then $F(\xi)=f^{2j/n}$ The first and the
second derivatives of $F(\xi)$ are
\begin{equation}\label{fd1}
F_\xi = \frac{2n}{j} f^{(2j-n)/n} f_\xi
\end{equation}
\begin{equation}\label{sd1}
F_{\xi \xi} = \frac{2j}{n} \frac{2j-n}{n}f^{(2j-2n)/n}f_\xi^2
+ \frac{2j}{n} f^{(2j-n)/n}f_{\xi \xi}
\end{equation}
$f_\xi^2$ can be substituted from Eq.(\ref{se}) and from the same equation one
obtains
\begin{equation}\label{se2}
f_{\xi \xi} = n^2 \left(f- \frac{n+1}{n} f^{(n+2)/n} \right)
\end{equation}
As one can see $F_\xi$ contains $f_\xi$ multiplied by a term of the kind
$f^{(kn+2j)/n}$, and the substitution of Eqs. (\ref{se}) and (\ref{se2}) in 
Eq.(\ref{sd1}) leads to the conclusion that the even derivative $F_{\xi\xi}$ contains 
only expressions of the kind $f^{(kn+2j)/n}$ (namely $f^{2j/n}$ and  
$f^{2(j+1)/n}$) as stated in the lemma.
The calculations can be made further. The result is that the third derivative
$F_{\xi\xi\xi}$ (which is an odd derivative) contains $f_\xi$ multiplied by a
sum  of expressions of the kind
$f^{(kn+2j)/n}$. The fourth derivative $F_{\xi\xi\xi \xi}$ (which is an even
derivative) contains only sum of expressions of the kind $f^{(kn+2j)/n}$, etc.
\par
Let us now consider the derivative of $F$ of order $2q$ where $q$ is a natural
number. According to the lemma we assume that this derivative (denoted as
$F^{(2q)}_\xi)$) contains only sum of expressions of the kind $f^{(kn+2j)/n}$.
Then the obtaining  the derivative $F^{(2q+1)}_\xi$ includes many operations similar
to the operation of obtaining  $F_\xi$ from Eq.(\ref{fd1}). The result for
$F^{(2q+1)}_\xi$ will be an expression consisting of  $f_\xi$ multiplied by 
terms of the kind $f^{(kn+2j)/n}$. The next even derivative $F^{(2q+2)}_\xi$ will be obtained
in a way similar to obtaining $F_{\xi \xi}$ from Eq.(\ref{sd1}) and the relationship for 
$F^{(2q+2)}_\xi$ will contain only expressions of the kind  $f^{(kn+2j)/n}$.
This concludes the proof of the lemma for the case $i=0$.
\par
Let now $i>0$.
The first derivative of $F(\xi)$ is
\begin{equation}\label{lx1}
F_\xi = \frac{in+2j}{n}f^{[(i-1)n+2j]/n} f_\xi
\end{equation}
This is an odd derivative and it contains $f_\xi$ and expression of the kind
$f^{(in+2j)/n}$ as stated in the lemma. The second derivative of $F(\xi)$ is
\begin{eqnarray}\label{lx2}
F_{\xi \xi} = \frac{in+2j}{n} \frac{(i-1)n+2j}{n} f^{[(i-2)n+2j]/n}  f_\xi^2
+ \frac{in+2j}{n}f^{[(i-1)n+2j]/n} f_{\xi \xi}
\end{eqnarray}
$f_\xi^2$ can be substituted from Eq.(\ref{se}) and $f_{\xi \xi}$ can be
substituted from Eq.(\ref{se2}). 
The substitution of Eqs(\ref{se}) and (\ref{se2}) in Eq.(\ref{lx2}) leads to the
conclusion that the even derivative $F_{\xi\xi}$ contains 
only expressions of the kind $f^{(in+2j)/n}$
(namely $f^{(in+2j)/n}$ and  $f^{[in+2(j+1)]/n}$) as stated in the lemma.
The calculations can be made further. The result is that the third derivative
$F_{\xi\xi\xi}$ (which is an odd derivative) contains $f_\xi$ multiplied by a
sum  of expressions of the kind
$f^{(in+2j)/n}$. The fourth derivative $F_{\xi\xi\xi \xi}$ (which is an even
derivative) contains only sum of expressions of the kind $f^{(in+2j)/n}$, etc.
\par
Let us now consider the derivative of $F$ of order $2q$ where $q$ is a natural
number. According to the lemma we assume that this derivative (denoted as
$F^{(2q)}_\xi)$) contains only sum of expressions of the kind $f^{(in+2j)/n}$.
Then the obtaining the derivative $F^{(2q+1)}_\xi$ includes many operations similar
to the operation of obtaining of $F_\xi$ from Eq.(\ref{fd1}). The result for
$F^{(2q+1)}_\xi$ will be an expression consisting of  $f_\xi$ multiplied by terms of
the $f^{(in+2j)/n}$. Then the next even derivative $F^{(2q+2)}_\xi$ be obtained
in a way similar to obtaining $F_{\xi \xi}$ from Eq.(\ref{sd1}) and the relationship for 
$F^{(2q+2)}_\xi$ will contain only expressions of the kind  $f^{(in+2j)/n}$.
This concludes the proof of the lemma. 
\end{proof}
Now we are ready to formulate and prove our main result.
\begin{theorem}
Let {$\cal{P}$} be a polynomial of the function $u(x,t)$ and its derivatives. $u(x, t)$ belongs to the differentiability class $C^k$, where $k$ is
the highest order of derivative participating in {$\cal{P}$}. {$\cal{P}$} can contain some or all of the following parts: (A) polynomial of $u$; (B)
monomials that contain derivatives of $u$ with respect to $x$ and/or products of such derivatives. Each such monomial can be
multiplied by a polynomial of $u$; (C) monomials that contain derivatives of $u$ with respect to $t$ and/or products of such derivatives.
Each such monomial can be multiplied by a polynomial of $u$; (D) monomials that contain mixed derivatives of $u$ with respect to $x$ and $t$ and/or products of such derivatives. Each such monomial can be multiplied by a polynomial of $u$; (E) monomials that contain products of derivatives of $u$ with respect to $x$ and derivatives of $u$ with respect to $t$. Each such monomial can be multiplied by a polynomial of $u$; (F) monomials that contain products of derivatives of $u$ with respect to $x$ and mixed derivatives of $u$ with respect to $x$ and $t$. Each such monomial can be multiplied by a polynomial of $u$; (G) monomials that contain products of derivatives of $u$  with respect to $t$ and mixed derivatives of $u$  with respect to $x$  and $t$ . Each such monomial can be multiplied by a polynomial of $u$ ; (H) monomials that contain products of derivatives of $u$ with respect to $x$ , derivatives of $u$  with respect to $t$  and mixed derivatives of $u$  with respect to $x$  and $t$ . Each such monomial can be multiplied by a polynomial of $u$.
\par
Let us consider the nonlinear partial differential equation:
\begin{equation}\label{nde}
{\cal{P}}=0
\end{equation}
We search for solutions of this equation of the kind 
$u( \xi) = \gamma f( \xi ), \xi = \alpha x + \beta t$. $\gamma $ is a parameter and $f( \xi )$ is a solution of the 
simplest equation $f_{\xi}^2 = n^2 f^2 - n^2 f^{\frac{2n+2}{n}}$ where $n$ is a
real positive number.
The substitution of this solution in (\ref{nde}) leads to a relationship R of the kind
\begin{equation}\label{r}
R = \sum_{i=0}^{N_1} \sum_{j=0}^{M_1} C_{ij} f( \xi )^{(in+2j)/n} + f_{\xi} 
\sum_{k=-N_2^*}^{N_2} \sum_{l=0}^{M_2} D_{kl} f( \xi )^{(kn+2l)/n},
\end{equation}
 where $N_1,N_2,N_2^*,M_1$ and $M_2$ are natural numbers depending on the form of the polynomial P.
The coefficients $C_{ij}$ and $D_{kl}$ depend on the parameters of 
Eq.(\ref{nde}) and on $\alpha, \beta, \gamma$. 
The sum $\sum \limits_{i=0}^{N_1} \sum \limits_{j=0}^{M_1} C_{ij} f( \xi )^{(in+2j)/n}$ consists
of terms of the kind $C_p^* f(\xi)^{\alpha_p}$ where $\alpha_p$ is some number and
$p=1,\dots$. The sum $\sum \limits_{k=0}^{N_2} \sum \limits_{l=0}^{M_2} D_{kl} f( \xi
)^{(kn+2l)/n}$ consists of terms of the kind $D_q^* f(\xi)^{\beta_q}$ where
$\beta_q$ is some number and $q=1,\dots$. The coefficients $C_p^*$ and $D_q^*$ depend on the parameters of 
Eq.(\ref{nde}) and on $\alpha, \beta, \gamma$
Then any nontrivial solution of the algebraic system 
\begin{eqnarray}\label{rel}
C_p^* = 0;  \ \  D_q^* = 0, 
\end{eqnarray}
leads to a solitary wave solution of the nonlinear PDE (\ref{nde}).
\end{theorem}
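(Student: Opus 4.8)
The plan is to show that the traveling-wave substitution collapses every term of $\mathcal{P}$ into one of two ``types'' governed by a parity, so that $R$ automatically splits into the pure-power sum and the $f_\xi$-times-power sum displayed in (\ref{r}). First I would record the effect of the ansatz $u(\xi)=\gamma f(\xi)$, $\xi=\alpha x+\beta t$: every derivative $\partial_x^m\partial_t^r u$ equals $\gamma\alpha^m\beta^r$ times the $(m+r)$-th $\xi$-derivative of $f$, so all the derivatives appearing in parts (B)--(H), whether pure $x$, pure $t$, or mixed, reduce to ordinary $\xi$-derivatives of $f$ up to constants depending only on $\alpha,\beta,\gamma$. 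Applying the Lemma with $i=1,\,j=0$ (so $F=f$) then tells me that the $l$-th derivative of $f$ is a sum of pure powers $f^{(kn+2l')/n}$ when $l$ is even, and $f_\xi$ times such a sum when $l$ is odd.

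Next I would set up the algebraic bookkeeping that turns this per-factor statement into a per-monomial statement. Let $E$ be the set of exponents $(an+2b)/n$ with $a$ an integer and $b$ a non-negative integer, let $\mathcal{E}$ denote finite linear combinations of $f^{\rho}$ with $\rho\in E$, and let $\mathcal{O}=f_\xi\cdot\mathcal{E}$. The decisive closure properties are $\mathcal{E}\cdot\mathcal{E}\subseteq\mathcal{E}$, $\mathcal{E}\cdot\mathcal{O}\subseteq\mathcal{O}$, and --- crucially --- $\mathcal{O}\cdot\mathcal{O}\subseteq\mathcal{E}$, the last following because $f_\xi^2=n^2(f^2-f^{(2n+2)/n})$ replaces the square of $f_\xi$ by a member of $\mathcal{E}$ while keeping all exponents in $E$. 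In this language the Lemma is precisely the assertion that differentiating a pure power alternates between $\mathcal{O}$ and $\mathcal{E}$; by linearity $\partial_\xi$ maps $\mathcal{E}\to\mathcal{O}$ and $\mathcal{O}\to\mathcal{E}$. Thus $(\mathcal{E},\mathcal{O})$ is a $\mathbf{Z}_2$-graded algebra on which differentiation reverses the grading.

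With this in hand the core step is a parity count on each monomial. A factor $(\partial^l u)^k$ has grade $kl$; by the Lemma it lies in $\mathcal{E}$ if $l$ is even and in $\mathcal{O}$ if $l$ is odd, so by the closure rules $(\partial^l u)^k$ lies in $\mathcal{E}$ precisely when $kl$ is even and in $\mathcal{O}$ precisely when $kl$ is odd. Multiplying the factors of a monomial --- and absorbing the accompanying polynomial in $u=\gamma f\in\mathcal{E}$ --- the number of $\mathcal{O}$-factors has the parity of the total grade, so an even-grade monomial lands in $\mathcal{E}$ and an odd-grade monomial lands in $\mathcal{O}$. Summing over all monomials of $\mathcal{P}$ gives $R\in\mathcal{E}+\mathcal{O}$, which is exactly (\ref{r}) after collecting like powers; since $\mathcal{P}$ has finitely many monomials of bounded order, only finitely many exponents of the form $(in+2j)/n$ and $(kn+2l)/n$ occur, so the lists $C_p^{*}$ and $D_q^{*}$ are finite.

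Finally, if every $C_p^{*}$ and $D_q^{*}$ vanishes then $R\equiv 0$, i.e.\ $\mathcal{P}=0$ along $u=\gamma f$, so this candidate is a genuine solitary-wave solution, the nontriviality requirement merely discarding the all-zero choice. I would close by remarking that the converse also holds: because $f=1/\cosh^{n}$ is even in $\xi$ while $f_\xi$ is odd, the $\mathcal{E}$- and $\mathcal{O}$-parts of $R$ must vanish separately, and distinct powers of $\cosh$ are linearly independent, so the algebraic system (\ref{rel}) is equivalent to $R\equiv 0$ and nothing is lost in passing to it. I expect the only real obstacle to be the bookkeeping of the mixed-derivative monomials in parts (F)--(H), where one must verify that reducing mixed and $t$-derivatives through the ansatz does not disturb the grade/parity correspondence; the Lemma together with the single identity $\mathcal{O}\cdot\mathcal{O}\subseteq\mathcal{E}$ is what makes even this case routine.
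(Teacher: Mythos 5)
Your proof is correct and follows the same overall route as the paper's: reduce every derivative to $\xi$-derivatives of $f$, invoke the Lemma to express those in terms of pure powers $f^{(kn+2j)/n}$ or $f_\xi$ times such powers, collect the result into the form (\ref{r}), and observe that annihilating all coefficients forces $R\equiv 0$. The difference is one of completeness rather than of strategy. The paper's proof is a three-sentence appeal to the Lemma and does not explicitly address the monomials of parts (B)--(H) that contain \emph{products} of derivatives: the Lemma only controls a single derivative of a single power of $f$, and a product of several odd-order derivative factors produces powers $f_\xi^m$ with $m\ge 2$ that must still be reduced. Your $\mathbf{Z}_2$-graded algebra $(\mathcal{E},\mathcal{O})$ with the closure rules $\mathcal{E}\cdot\mathcal{E}\subseteq\mathcal{E}$, $\mathcal{E}\cdot\mathcal{O}\subseteq\mathcal{O}$, $\mathcal{O}\cdot\mathcal{O}\subseteq\mathcal{E}$ (the last via the simplest equation itself) and the parity count identifying the grade of a monomial with its $\mathcal{E}$/$\mathcal{O}$ type is exactly the missing bookkeeping; it is what justifies that the reduced expression really has the two-part form (\ref{r}) for the full class of equations in the theorem, and it also explains the paper's otherwise unmotivated emphasis on odd versus even grade. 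Your closing remark on the converse (that (\ref{rel}) is not only sufficient but necessary for $R\equiv 0$, by the even/odd splitting and linear independence of distinct powers of $1/\cosh$) goes beyond what the paper claims but is sound. The one point to state a little more carefully is the range of exponents: differentiation can lower the integer $a$ in $(an+2b)/n$ below zero, which is why the paper allows $k\ge -N_2^*$ in the $f_\xi$-sum; your set $E$ already permits arbitrary integer $a$, so this is consistent, but it is worth saying explicitly that finiteness of the exponent list follows from the bounded order and degree of $\mathcal{P}$, as you do.
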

\begin{proof}
Let $f( \xi )$ be a solution of the nonlinear ODE  $f_{\xi}^2 = n^2 f^2 - n^2
f^{\frac{2n+2}{n}}$. According to the lemma above the higher derivatives of 
$f(\xi)$ contain terms of the kind  $f^{(in+2j)/n}$ or $f_\xi$ multiplied by
terms of the kind $f^{(kn+2j)/n}$. The substitution of these derivatives in 
the solved nonlinear PDE ${\cal{P}}=0$ will reduce the solved equation
to a relationship of the kind (\ref{r}). In order to obtain a solution of
Eq.(\ref{r}) we have to solve the system of equations (\ref{rel}), i.e., a
system of nonlinear algebraic equations for $\alpha$, $\beta$, $\gamma$ and the
parameters participating in $\cal{P}$. Any nontrivial solution of the last
system of nonlinear algebraic equation leads to a solution of Eq.(\ref{nde}) of
the kind $u=\gamma f(\xi)$ where $f(\xi)$ is solution of the simplest equation 
(\ref{se}) (note that $n$ is an arbitrary real positive number).
\end{proof}
Let us note two particular cases connected to the values $n=1$ and $n=2$. For
the case $n=1$ the simplest equation (\ref{se}) becomes 
\begin{equation}\label{se3}
f_{\xi}^2 =  f^2 - f^4 
\end{equation}
and its solution is $f( \xi ) = \frac{1}{\cosh(x)}$. In this case $R$ from
Eq.(\ref{r}) can be reduced to 
\begin{equation}\label{r1}
R = \sum_{i=0}^{N}  C_{i} f( \xi )^i + f_{\xi} \sum_{j=0}^{M}  D_{j} f( \xi )^j,
\end{equation}
and the system of nonlinear algebraic equations becomes $C_i=0,D_j=0$; 
$i=0,\dots$, $j=0,\dots$.  The other particular case is $n=2$. Here the simplest equation becomes
\begin{equation}\label{se4}
f_{\xi}^2 =  4(f^2 - f^3) 
\end{equation}
and the solution is $f( \xi ) = \frac{1}{\cosh^2(x)}$.
 For this particular case $R$ again can be reduced to the relationship of the
kind (\ref{r1}) and the described methodology leads to 
the solitary wave solutions of many famous water-waves equations such as the
Korteweg-deVries equation, Boussinesq equation, Degasperis-Processi equation,
etc. \cite{v14}.
\section{Examples}
We shall consider examples of nonlinear partial differential equations that contain
monomials only of odd and even grade with respect to participating derivatives.
\subsection{Case of a nonlinear partial differential equation that contains monomials
only of odd grades with respect to participating derivatives}
Let us consider as an example  the equation
\begin{equation}\label{e1}
a F^p \frac{\partial ^\mu F}{\partial x^ \mu} + bF^q \frac{\partial^\nu 
F}{\partial t^\nu}  + cF^r \frac{\partial 
F}{\partial x}= 0
\end{equation}
where $a$,$b$,$p$,$q$, $\mu$, and $\nu$ are parameters. We search for a solution
of the kind $F=\gamma f$ where $\gamma$ is a parameter and $f(x,t) = f(\xi)$;
$\xi = \alpha x + \beta t$ is solution of the simplest equation (\ref{se}).
The substitution of $F$ in Eq.(\ref{e1}) leads to the following
equation for $f(\xi)$
\begin{equation}\label{ex1}
\alpha^\mu \gamma^{p+1} a f^p \frac{d^\mu f}{d \xi^\mu} + \beta^\nu \gamma^{q+1}
b f^q \frac{d^\nu f}{d \xi^\nu} + \alpha c \gamma^{r+1} f^r \frac{df}{d \xi} =0
\end{equation}
The most simple case of nonlinear equation that contains only odd derivatives is
$\mu=1$, $\nu=3$ or $\mu=3$, $\nu=1$. 
\subsubsection{Case $\mu=1$, $\nu=3$}
In this case the substitution of Eq.(\ref{se}) in Eq.(\ref{ex1}) 
leads to the relationship
\begin{equation}\label{ae1}
\alpha \gamma^{p+1} a f^p + \beta^3 \gamma^{q+1}n^2 b f^q  - \beta^3
\gamma^{q+1}(n+1)(n+2) b f^{q+2/n} + \alpha c \gamma^{r+1} f^r =0
\end{equation}
The relationship (\ref{ae1}) contains several powers of the
function $f$. In order to obtain the system of the nonlinear algebraic
relationships we have to perform the balance procedure from the modified method
of simplest equation. As a result we have two possibilities: (i) $p = q +2/n$, $r=q$; (ii)
$r=q+2/n$, $p=q$. For the case (i) the equation (\ref{e1}) becomes 
\begin{equation}\label{e1x1}
\left(aF^{2/n} + c  \right) \frac{\partial F}{\partial x} + b \frac{\partial^3 F}{\partial t^3} =0
\end{equation}
We obtain a system of two nonlinear algebraic
equations. The solution of this system is
\begin{equation}\label{sol1}
\gamma = \left[ - \frac{(n+1)(n+2)}{n^2} \frac{c}{a}\right]^{n/2}; \ \ 
\alpha = - \frac{\beta^3 b n^2}{c}
\end{equation}
and the corresponding solution of the equation (\ref{e1x1}) is
\begin{equation}\label{seq1}
F_{(n)}(x,t) = \frac{\left[- \frac{(n+1)(n+2)}{n^2} \frac{c}{a} \right]^{n/2}}{
\cosh^n \left[ - \frac{n^2 \beta^3 b}{c}x + \beta t \right]}
\end{equation}
Now for $n=1$  and for $n=2$ we obtain the solitary wave solutions
\begin{equation}\label{seq2}
F_{(1)}(x,t) = \frac{\left[- 6 \frac{c}{a} \right]^{1/2}}{
\cosh \left[ - \frac{ \beta^3 b}{c}x + \beta t \right]}; \ \ 
F_{(2)}(x,t) = \frac{\left[- 3 \frac{c}{a} \right]}{
\cosh^2 \left[ - \frac{4 \beta^3 b}{c}x + \beta t \right]},
\end{equation}
for $n=10$  and for $n=1/4$ we obtain the solitary wave solutions
\begin{equation}\label{seq4}
F_{(10)}(x,t) = \frac{\left[- \frac{132}{100} \frac{c}{a} \right]^{5}}{
\cosh^{10} \left[ - \frac{100 \beta^3 b}{c}x + \beta t \right]}; \ \ 
F_{(1/4)}(x,t) = \frac{\left[- 45 \frac{c}{a} \right]^{1/8}}{
\cosh^{1/4} \left[ - \frac{ \beta^3 b}{16 c}x + \beta t \right]}
\end{equation}
Note that $n$ can  be arbitrary positive real number. For an example for
$n=2.22$ the solitary wave solution of Eq.(\ref{e1x1}) is
\begin{equation}\label{seq6}
F_{(2.22)}(x,t) = \frac{\left[- 2.7571626 \frac{c}{a} \right]^{1.11}}{
\cosh^{2.22} \left[ - \frac{4.9284 \beta^3 b}{c}x + \beta t \right]}
\end{equation}
\par
Let us now consider the case (ii). Eq.(\ref{e1}) becomes
\begin{equation}\label{e1x2}
\left(a + c F^{n/2} \right) \frac{\partial F}{\partial x} + b \frac{\partial^3 F}{\partial t^3} =0
\end{equation}
which is equation of the same kind as Eq.(\ref{e1x1}).
In this case the balance procedure leads 
again to a system of two nonlinear algebraic equations. The solutions of this
system is
\begin{equation}\label{sol2}
\gamma = \left[ - \frac{(n+1)(n+2)}{n^2} \frac{a}{c}\right]^{n/2}; \ \ 
\alpha = - \frac{\beta^3 b n^2}{a}
\end{equation}
and in this case the solution becomes
\begin{equation}\label{seq7}
F_{(n)}(x,t) = \frac{\left[- \frac{(n+1)(n+2)}{n^2} \frac{a}{c} \right]^{n/2}}{
\cosh^n \left[ - \frac{n^2 \beta^3 b}{a}x + \beta t \right]}
\end{equation}
\subsubsection{Case $\mu=3$, $\nu=1$}
Let us now consider the case $\mu=3$, $\nu=1$. The substitution of the 
form of $F$ and the derivatives of $f$ into Eq.(\ref{ex1}) leads  to the 
relationship
\begin{equation}\label{rel1}
\alpha^3 a n^2 \gamma^{p+1}f^p - \alpha^3 a (n+1)(n+2) \gamma^{p+1} f^{p+2/n} +
\beta \gamma^{q+1} b f^q + \alpha c \gamma^{r+1} f^{r}=0
\end{equation}
There are two
possibilities: (i) $q=p+2/n$,$r=p$ ; (ii) $r=p+2/n$, $q=p$. 
For the case (i) Eq.(\ref{e1}) becomes
\begin{equation}\label{e1x3}
a  \frac{\partial ^3 F}{\partial x^3} + bF^{2/n} \frac{\partial 
F}{\partial t}  + c \frac{\partial 
F}{\partial x}= 0
\end{equation}
The obtained system of nonlinear algebraic equations has
a solution
\begin{equation}\label{c2s1}
\alpha = \pm \left( - \frac{c}{an^2}\right)^{1/2}; \ \beta = \mp \left( - 
\frac{c}{an^2} \right)^{3/2} \frac{a}{b} (n+1)(n+2) \gamma^{-2/n}
\end{equation}
and the solitary wave solution of Eq.(\ref{e1x1}) is
\begin{equation}\label{swx3}
F(x,t) = \frac{\gamma}{\cosh^n \Bigg[\pm \left( - \frac{c}{an^2}\right)^{1/2} x 
  \mp \left( - 
\frac{c}{an^2} \right)^{3/2} \frac{a}{b} (n+1)(n+2) \gamma^{-2/n} t\Bigg]}
\end{equation}
Note that $n$ is arbitrary finite positive real number.
\par 
For the case (ii) Eq.(\ref{e1}) becomes
\begin{equation}\label{e1x4}
a  \frac{\partial ^3 F}{\partial x^3} + b \frac{\partial 
F}{\partial t}  + cF^{n/2} \frac{\partial 
F}{\partial x}= 0
\end{equation}
and the obtained system of nonlinear algebraic equations has a
solution
\begin{equation}\label{c2s2}
\alpha = \pm \left[ \frac{c\gamma^{2/n}}{a(n+1)(n+2)}\right]^{1/2}; \ 
\beta = \mp \frac{an^2}{b} \left[ \frac{c \gamma^{2/n}}{a(n+1)(n+2)}\right]^{3/2}
\end{equation}
The solitary wave solution of Eq.(\ref{e1x4}) is
\begin{equation}\label{swx3}
F(x,t) = \frac{\gamma}{\cosh^n \Bigg[ 
\pm \left[ \frac{c\gamma^{2/n}}{a(n+1)(n+2)}\right]^{1/2} x  \mp
\frac{an^2}{b} \left[ \frac{c \gamma^{2/n}}{a(n+1)(n+2)}\right]^{3/2} t  \Bigg]}
\end{equation}
Let us consider several particular cases of Eq.(\ref{e1x4}). For $n=2$
Eq.(\ref{e1x4}) contains as particular case the Korteweg-deVries equation and 
the solution (\ref{c2s2}) is reduced to the famous ${\cosh}^2$ solitary wave
solution. For $n=1$ one of the the solutions of Eq.(\ref{e1x4}) is
\begin{equation}\label{ex_sol1}
F(x,t) = \frac{\gamma}{\cosh \Bigg[\left[ \frac{c\gamma^{2}}{6a}\right]^{1/2} x  - 
\frac{a}{b} \left[ \frac{c \gamma^{2}}{6 a}\right]^{3/2} t  \Bigg]}
\end{equation}
For $n=1/2$ Eq.(\ref{e1x4}) contains as particular case the modified
Korteweg-deVries equation and one of the solutions of (\ref{swx3}) is reduced to
\begin{equation}\label{ex_sol2}
F(x,t) = \frac{\gamma}{\cosh^{1/2} \Bigg[\left[ \frac{4 c\gamma^{4}}{15a}\right]^{1/2} x  - 
\frac{a}{4b} \left[ \frac{4 c \gamma^{4}}{15 a}\right]^{3/2} t  \Bigg]}
\end{equation}
\subsection{Case of a nonlinear partial differential equation that contains 
monomials only of even grade with respect to participating derivatives}
\par
Let us now consider the case of even derivatives for the equation
\begin{equation}\label{e2}
a F^p \frac{\partial ^\mu F}{\partial x^ \mu} + bF^q \frac{\partial^\nu 
F}{\partial t^\nu}  + cF^r  + d F^s =0
\end{equation}
Let us discuss the case  $\mu=4$, $\nu=2$. 
We need the following relationship
\begin{equation}\label{4der}
f_{\xi \xi \xi \xi} = n^4 f - 2n(n+1)(n^2+2n+2)f^{(n+2)/n} +
n(n+1)(n+2)(n+3)f^{(n+4)/n}
\end{equation}
We remember that we search for a solution of the kind $F(\xi)=\gamma f(\xi)$
where $f$ is solution of the simplest equation (\ref{se}). The application of
the balance procedure from the modified method of simplest equation leads to
the relationships $q=p+2/n$, $r=p+1$ together with one of the possibilities: $s=p+(n+2)/n$
or $s=p+(n+4)/n$. Let us consider the case $q=p+2/n$, $s=p+(n+2)/n$, $r=p+1$.
The equation (\ref{e2}) becomes
\begin{equation}\label{e2x1}
a \frac{\partial^4 F}{\partial x^4} + bF^{2/n} \frac{\partial^2 F}{\partial t^2} 
+cF + dF^{1 + 2/n} = 0
\end{equation}
Taking into account the forms of $F$ and $f$ we can
reduce Eq.(\ref{e2x1}) to the following system of nonlinear ordinary
differential equations
\begin{eqnarray}\label{e2x}
(b \beta^2 n^2 + d) \gamma^{2/n}-2 a \alpha^4  n (n+1) (n^2+2n+2) =0
\nonumber \\
-\beta^2 b \gamma^{2/n}+a\alpha^4(n+3)(n+2) =0
\nonumber \\
a\alpha^4 n^4 + c = 0
\end{eqnarray}
One possible solution of the system (\ref{e2x}) is
\begin{eqnarray}\label{solx}
\alpha&=& \frac{(-c a^3)^{1/4}}{(a n)} \nonumber \\
\beta&=& \Bigg[ 
\frac{(n^2+5n+6)d}{bn(n^3+n^2+2n+4)} \Bigg]^{1/2} 
\nonumber \\
\gamma&=& \bigg[ - \frac{c(n^3+n^2+2n+4)}{dn^3} \bigg]^{n/2}
\end{eqnarray}
and the corresponding solitary wave solution is
\begin{eqnarray}\label{sw1}
F(x,t) = \bigg[ - \frac{c(n^3+n^2+2n+4)}{dn^3} \bigg]^{n/2}{\Bigg/}\cosh^n 
\Bigg \{\frac{(-c a^3)^{1/4}}{(a n)} x + \nonumber \\
\Bigg[ 
\frac{(n^2+5n+6)d}{bn(n^3+n^2+2n+4)} \Bigg]^{1/2} t \Bigg \}
\end{eqnarray}
where $n$ can be arbitrary positive finite real number.
\par 
The second possibility is $q=p+2/n$ and $s=p+(n+4)/n$. 
In this case the equation (\ref{e2}) becomes
\begin{equation}\label{e2x2}
a \frac{\partial^4 F}{\partial x^4} + bF^{2/n} \frac{\partial^2 F}{\partial t^2} 
+ cF + dF^{1+4/n} = 0
\end{equation}
Taking into account that $F=\gamma f$ and the simplest equation for $f$ we can reduce
Eq.(\ref{e2x2}) to the system of
nonlinear algebraic equations 
\begin{eqnarray}\label{e2y}
-b n \beta^2 (n+1) \gamma^{2/n} + d \gamma^{4/n} + a \alpha^4  n (n+3) (n+2) (n+1) = 0 \nonumber \\
-\frac{1}{2} \beta^2 b \gamma^{2/n} n + a \alpha^4  (n+1) (n^2+2n+2) = 0
\nonumber \\
a \alpha^4 n^4 + c = 0 \nonumber \\
\end{eqnarray}
One possible solution of this system is
\begin{eqnarray}\label{soly}
\alpha &=& \frac{1}{n} \left( - \frac{c}{a} \right)^{1/4} \nonumber \\
\beta &=& \Bigg[- \frac{cd^{1/2}}{b} \frac{(n+1)(n^3+n^2+2n+4)+n(n^3+6n^2+11n+6)
}{n^3(n+1)[-c(n+1)(n^3+n^2+2n+4)]^{1/2}} \Bigg]^{1/2} \nonumber \\
\gamma &=& \Bigg[- \frac{c(n+1)(n^3+n^2+2n+4)}{dn^4} \Bigg]^{n/4}  
\end{eqnarray}
and the corresponding solitary wave solution is
\begin{eqnarray}\label{sw2}
F(x,t) &=&  \left[-\frac{c(n+1)(n^3+n^2+2n+4)}{dn^4} \right]^{n/4}{\Bigg/}
\cosh^n \Bigg\{ \frac{1}{n} \left( - \frac{c}{a} \right)^{1/4}x + \nonumber \\
&& \Bigg[- \frac{cd^{1/2}}{b} \frac{(n+1)(n^3+n^2+2n+4)+n(n^3+6n^2+11n+6)
}{n^3(n+1)[-c(n+1)(n^3+n^2+2n+4)]^{1/2}} \Bigg]^{1/2} t\Bigg\} \nonumber \\
\end{eqnarray}

\section{Concluding Remarks}
In this article we have continued our research from
\cite{v14} on the methodology connected to 
the method of simplest equation for obtaining exact solutions of nonlinear 
partial differential equations. We have formulated a theorem about a 
solitary wave solution of kind $1/\cosh^n(\xi)$, $\xi=\alpha x + \beta t$ 
that may help us to find
solitary wave solutions of a large class of nonlinear partial differential 
equations. The developed methodology is applied to two classes of nonlinear 
PDEs. We note that the methodology can be applied also to more complicated 
nonlinear PDEs, e.g., to  equations containing together monomials of odd and
even grades with respect to participating derivatives. Let us note that the limits of the 
applicability of the methodology will be reached when the size of the 
system of nonlinear algebraic equations becomes large. Then the number of 
parameters participating in the solved equation and in the solution can become
smaller than the number of equations. Another problem may arise if some of the
algebraic equations have nonlinearity of high order and because of this an
analytical solution is impossible to be obtained. But as we have shown the methodology
is effective and leads to exact solitary wave solutions to many nonlinear partial
differential equations.

\end{document}